\newcommand{\BibTeX}{\textsc{B\kern-0.1emi\kern-0.017emb}\kern-0.15em\TeX}
\newcommand\restr[2]{{
  \left.\kern-\nulldelimiterspace 
  #1 
  \vphantom{\big|} 
  \right|_{#2} 
  }}
\begin{document}

\title{Parameterized Completeness Results for Bayesian Inference}

\author{Hans L.~Bodlaender \Email{h.l.bodlaender@uu.nl}\\
   \addr Utrecht University, Department of Information and Computing Sciences\\
   Princetonplein 5, 3508 TB Utrecht, the Netherlands\and
   \Name{Nils Donselaar} \Email{nils.donselaar@donders.ru.nl}\\ 
   \Name{Johan Kwisthout}  \Email{johan.kwisthout@donders.ru.nl}\\
   \addr Radboud University, Donders Institute for Brain, Cognition and Behaviour\\
   Thomas van Aquinostraat 4, 6525 GD Nijmegen, the Netherlands}

\maketitle

\begin{abstract}
We present completeness results for inference in Bayesian networks with respect to two different parameterizations, namely the number of variables and the topological vertex separation number. For this we introduce the parameterized complexity classes $\mathsf{W[1]PP}$ and $\mathsf{XLPP}$, which relate to $\mathsf{W[1]}$ and $\mathsf{XNLP}$ respectively as $\mathsf{PP}$ does to $\mathsf{NP}$. The second parameter is intended as a natural translation of the notion of pathwidth to the case of directed acyclic graphs, and as such it is a stronger parameter than the more commonly considered treewidth. Based on a recent conjecture, the completeness results for this parameter suggest that deterministic algorithms for inference require exponential space in terms of pathwidth and by extension treewidth. These results are intended to contribute towards a more precise understanding of the parameterized complexity of Bayesian inference and thus of its required computational resources in terms of both time and space.
\end{abstract}
\begin{keywords}
Bayesian networks; inference; parameterized complexity theory.
\end{keywords}

\section{Introduction}

Close to 35 years ago, \cite{LauritzenS88} revealed the central role of treewidth in the complexity of Bayesian inference, by exhibiting what we would call in modern terminology a fixed-parameter tractable algorithm for inference parameterized by the treewidth and maximum cardinality of the variables. While various other exact algorithms for inference have been developed since then, treewidth has remained a determining factor in their complexity, and indeed \cite{Kwisthout10} has shown that this dependence may well be unavoidable. However, relatively little is currently known about the precise complexity of inference with respect to treewidth, nor with respect to similar graph parameters for that matter. In this paper, we take the first steps in remedying this situation by providing such characterisations with respect to two different parameters, namely the number of variables (not to be confused with the aforementioned cardinality of individual variables) and the topological vertex separation number (serving as a proxy for pathwidth).\\
\\
The main contribution of this paper lies in identifying the parameterized complexity classes for which the corresponding parameterized inference problems are complete. Most of these complexity classes have not yet been explicitly considered in the literature, or only recently as in the case of the class $\mathsf{XNLP}$. The completeness results for the number of variables as a parameter mostly serve as a more accessible demonstration of the proof techniques which we employ for the topological vertex separation number. That said, we believe these results also have independent value as they offer a valuable perspective on the subsequent results, the number of variables being in some sense the strongest possible graph parameter.\\
\\
The completeness results for the topological vertex separation number are certainly of more direct importance. We introduce this parameter in Section~\ref{sec:tvsn} as a straightforward translation of the vertex separation number to the directed setting, the latter being equivalent to the pathwidth of a graph as shown in \cite{Kinnersley92}. In particular, we show completeness for (the probabilistic version of) $\mathsf{XNLP}$, a class which
was proposed by \cite{ElberfeldST15} and has been brought to attention in \cite{BodlaenderGNS22}. These results gain further relevance by relating them to a conjecture put forward in \cite{PilipczukW18}, which by the discussion in \cite{BodlaenderGNS22} can be rephrased as expressing that $\mathsf{XNLP}$-hard problems cannot be solved under simultaneous restrictions of time and space. More precisely, they conjecture that no such problem can have a (deterministic) algorithm which runs in both $|x|^{g(k)}$ time and $f(k)|x|^c$ space for some computable functions $f,g$ and constant $c$. In other words, if this conjecture holds, no $\mathsf{XNLP}$-hard problem has an $\mathsf{XP}$-algorithm which uses only a parameterized polynomial amount of space.\\
\\
Our results therefore imply that, absent any restriction on the cardinality of variables, inference parameterized by the number of variables lies at what is generally believed to be the lower bound of parameterized intractability, and (under the aforementioned conjecture) inference parameterized by the topological vertex separation number cannot be solved efficiently using only limited space. Though we do not explicitly demonstrate this here, the idea is that the latter parameter corresponds to the pathwidth of the network, which means that this result also holds for pathwidth and by extension for treewidth as well.

\section{Definitions}

In this section we cover the relevant basics from parameterized complexity theory, which includes definitions of particular kinds of reductions, complexity classes and problems complete for these. The paper's novel contributions  in this area are found in Sections~\ref{sec:numvar} and~\ref{sec:tvsn}.  

\begin{definition}\label{def:pareduc}
A \emph{parameterized reduction} from a parameterized problem $k_1$-$D_1$ to a parameterized problem $k_2$-$D_2$ is a mapping $(x,k) \mapsto (x',k')$ such that $(x,k)$ is a Yes-instance of $k_1$-$D_1$ precisely when $(x',k')$ is a Yes-instance of $k_2$-$D_2$, and $k_2 \leq g(k_1)$ for some computable function $g$. A \emph{fixed-parameter tractable reduction} (fpt-reduction) is a parameterized reduction which is computable in time $f(k)|x|^c$ for some computable function $f$ and constant $c$. A \emph{parameterized logspace reduction} (pl-reduction) is a parameterized reduction which is computable using space $f(k) + O(\log |x|)$ for some computable function $f$ and constant $c$.
\end{definition}

The complexity class $\mathsf{W[1]}$ is typically defined as the class of problems fpt-reducible to \textsc{Weighted $n$-Satisfiability} for fixed $n\geq 2$. Here we instead use the machine characterisation of $\mathsf{W[1]}$ which was established in  \cite{chenflumgrohe2}. A more natural complete problem for $\mathsf{W[1]}$ is $k$-\textsc{Clique} -- see e.g.~\cite{downeyfellows2} for a proof of this fact.

\begin{definition}\label{def:W1}
$\mathsf{W[1]}$ is the class of parameterized decision problems $k$-$D$ for which there exists a computable function $f$, a constant $c$ and a non-deterministic Turing machine $\mathcal{M}$ which on input $(x,k)$ correctly decides in time $f(k)|x|^c$ with $f(k)\log |x|$ tail-restricted non-determininism, i.e.~only the last $f(k)\log |x|$ steps are allowed to be non-deterministic.
\end{definition}
\noindent
$k$-\textsc{Clique}\\
\textbf{Input:} A graph $G=(V,E)$, an integer $k$.\\
\textbf{Parameter:} $k$.\\
\textbf{Question:} Is there a $W\subseteq V$ such that $|W|\geq k$ and $(u,v)\in E$ for all distinct $u,v\in W$?

\begin{lemma}\label{lem:kClique}
$k$-\textsc{Clique} is $\mathsf{W[1]}$-complete.
\end{lemma}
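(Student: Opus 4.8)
The plan is to prove $k$-\textsc{Clique} is $\mathsf{W[1]}$-complete in two directions: membership in $\mathsf{W[1]}$ and $\mathsf{W[1]}$-hardness. Since the paper uses the machine characterization of $\mathsf{W[1]}$ (Definition~\ref{def:W1}), I need to show both that $k$-\textsc{Clique} can be decided by a machine with tail-restricted nondeterminism, and that every problem in $\mathsf{W[1]}$ reduces to it. The hardness direction could be done either by a direct reduction from the machine model, or by citing/using the known fpt-reduction chains. Let me think about what's cleanest given the machine characterization.

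For **membership**: I need a nondeterministic TM that decides $k$-\textsc{Clique} in time $f(k)|x|^c$ with only $f(k)\log|x|$ tail nondeterminism. The idea: a clique of size $k$ is specified by $k$ vertices. Encoding one vertex takes $\log|V| \le \log|x|$ bits, so guessing $k$ vertices takes $k\log|x|$ nondeterministic bits — which fits the $f(k)\log|x|$ budget with $f(k)=k$. The subtlety: the guessing must be in the *tail* (last steps), and before guessing we need to have done all deterministic preprocessing. So the machine first deterministically writes down/indexes the graph, then in its final $k\log|x|$ steps nondeterministically guesses the $k$ vertex indices, then... wait, verification (checking all $\binom{k}{2}$ edges) must happen, and that verification is deterministic but comes *after* the guessing.

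So the key obstacle for membership is the tail-restriction: verification must be deterministic and can't come before the guess. The standard trick: the last $f(k)\log|x|$ nondeterministic steps guess the vertices *and* the machine is allowed to continue deterministically afterward — re-read Definition~\ref{def:W1}: "only the last $f(k)\log|x|$ steps are allowed to be non-deterministic," meaning after those steps the computation ends. Hmm, this forces verification to be folded in. The resolution is to guess the $k$ vertices and verify in the *same* tail phase by interleaving, or to arrange that the nondeterministic bits are exactly the vertex choices while edge-checks are deterministic substeps between guesses — but those substeps are not themselves nondeterministic, so "last $f(k)\log|x|$ steps are nondeterministic" must be read as "at most $f(k)\log|x|$ nondeterministic steps, all occurring in the tail." Under that reading, membership is routine: guess $k$ vertices (total $k\log|x|$ nondeterministic bits), verify adjacency deterministically in polynomial time within the tail. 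The main obstacle for membership is therefore interpreting and satisfying the tail-restriction correctly.

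For **hardness**, the honest approach is to reduce from the $\mathsf{W[1]}$-machine characterization, but this is technical; I would instead invoke the well-known $\mathsf{W[1]}$-completeness of $k$-\textsc{Clique} via the equivalence of the machine characterization with the classical definition (established in \cite{chenflumgrohe2}), reducing from \textsc{Weighted $n$-Satisfiability} or \textsc{Short Turing Machine Acceptance}. Concretely, I would reduce from an arbitrary $k$-$D \in \mathsf{W[1]}$ by taking its nondeterministic machine $\mathcal{M}$, whose accepting computation is determined by the $f(k)\log|x|$ tail-nondeterministic bits; these bits select among $|x|^{O(f(k))}$ configurations, and I build a graph whose cliques encode consistent, accepting guess-sequences. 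The natural construction partitions the guess into $f(k)$ blocks (each a vertex-layer of size $|x|^{O(1)}$), placing an edge between two block-choices iff they are jointly consistent with an accepting run; a $k'$-clique (with $k' = f(k)$) then corresponds exactly to an accepting computation.

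**The hard part** will be the hardness reduction's verification of consistency: I must ensure the graph's edge relation precisely captures "this sequence of nondeterministic choices yields acceptance," which requires the layered-graph gadget to enforce both that each layer's choice is locally valid and that consecutive choices respect $\mathcal{M}$'s transition relation, all while keeping the construction fpt-computable and the parameter $k' = f(k)$ bounded by a function of $k$. Because the paper has already fixed the machine model, I would lean on \cite{chenflumgrohe2} and \cite{downeyfellows2} to supply the detailed correctness of this layered clique construction rather than reproving it, citing these as the sources for the standard equivalence and reduction.
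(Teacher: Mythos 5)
Your proposal is consistent with the paper's treatment: the paper gives no proof of this lemma at all, simply citing \cite{downeyfellows2} (and \cite{chenflumgrohe2} for the machine characterisation), and you likewise defer the substantive hardness direction to those same sources while adding the routine tail-guessing membership argument. Note only that your layered hardness sketch glosses over the known subtlety that pairwise edge-consistency does not by itself capture ``jointly consistent with an accepting run'' (the standard constructions handle this by encoding local time--cell snapshots so that pairwise checks suffice), but since you explicitly lean on the cited references for those details, exactly as the paper does, this is not a gap relative to the paper.
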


The complexity class $\mathsf{XNLP}$ was first introduced under this name in \cite{BodlaenderGNS22}, where it was shown amongst other things that $k$-\textsc{Chained Multicoloured Clique} is $\mathsf{XNLP}$-complete. We will also build on previous work on this class which has been carried out in \cite{ElberfeldST15}, which used the descriptive name $N[f \text{poly},f \log]$ instead. 

\begin{definition}\label{def:XNLP}
$\mathsf{XNLP}$ is the class of parameterized decision problems $k$-$D$ for which there exists a computable function $f$, a constant $c$ and a non-deterministic Turing machine $\mathcal{M}$ which on input $(x,k)$ correctly decides in time $f(k)|x|^c$, using at most $f(k)\log |x|$ space.
\end{definition}
\noindent
$k$-\textsc{Chained Multicoloured Clique}\\
\textbf{Input:} A graph $G=(V,E)$, where $V$ is partitioned into sets $V_1,\ldots,V_r$ such that for every $(u,v)\in E$, if $u\in V_i$ and $v\in V_j$, then $|i-j|\leq 1$, and a colouring function $f: V\rightarrow [1,k]$.\\
\textbf{Parameter:} $k$.\\
\textbf{Question:} Is there a $W\subseteq V$ such that for each $i\in [1,r-1]$, $W \cap (V_i\cup V_{i+1})$ is a clique, and for every $i\in[1,r]$ and $j\in[1,k]$ there is a $w\in W\cap V_i$ with $f(w) = j$? 

\begin{lemma}\label{lem:kCMC}
$k$-\textsc{Chained Multicoloured Clique} is $\mathsf{XNLP}$-complete.
\end{lemma}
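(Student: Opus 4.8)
The plan is to prove $\mathsf{XNLP}$-completeness of $k$-\textsc{Chained Multicoloured Clique} in two parts: membership and hardness. Since this result is attributed to prior work (\cite{BodlaenderGNS22}), I expect the proof here to be a sketch or citation, but I will lay out how a self-contained argument would proceed.

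\paragraph{Membership in $\mathsf{XNLP}$.} First I would design a non-deterministic Turing machine that verifies a solution $W$ while using only $f(k)\log|x|$ space. The key observation is that the chain structure allows a left-to-right sweep: the constraint graph only connects vertices in consecutive layers $V_i, V_{i+1}$, so to check all clique and colour-covering conditions it suffices to remember, at any moment, only the chosen vertices in the current layer and the previous layer. The machine processes the layers $V_1,\ldots,V_r$ in order, non-deterministically guessing for each layer $V_i$ exactly one vertex of each colour $j\in[1,k]$ (this guarantees the colour-covering condition), storing these $k$ guessed vertices using $O(k\log|x|)$ bits. At each step it verifies that the newly guessed layer forms a clique together with the retained previous layer, then discards the layer before last. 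This runs in polynomial time and $f(k)\log|x|$ space, giving membership.

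\paragraph{Hardness.} For $\mathsf{XNLP}$-hardness I would give a pl-reduction from a known $\mathsf{XNLP}$-complete problem, the natural candidate being a chained or ``timed'' variant capturing the $N[f\mathrm{poly},f\log]$ machine model of \cite{ElberfeldST15}. The idea is to encode the configuration sequence of such a space-bounded non-deterministic machine into the layers of the chained clique instance: each layer $V_i$ represents one time step, the colours encode the components of a configuration (tape contents restricted to the $f(k)\log|x|$ cells, head position, state), and the consecutive-layer clique constraints enforce that successive configurations are related by a valid transition. The colour-covering requirement forces a complete, consistent configuration to be selected at each step, while the clique edges between $V_i$ and $V_{i+1}$ encode the machine's transition relation. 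One must verify that this encoding is computable in parameterized logspace and that the parameter $k$ depends only on the space bound, i.e.\ on the original parameter.

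\paragraph{Main obstacle.} The hardest part is the hardness direction, specifically arranging the colour classes and the edge set so that selecting one vertex per colour in each layer corresponds bijectively to a single valid configuration, while the edges between consecutive layers simultaneously enforce both intra-configuration consistency and the transition relation, all within a pl-reduction. The delicate point is keeping the number of colours bounded by a function of the parameter alone (so that the reduction respects the parameterized complexity class) while still encoding a configuration whose description length is $f(k)\log|x|$; this typically requires splitting each configuration across the two adjacent layers or using the layer index cleverly, and checking that the logspace bound on the reduction is genuinely met rather than merely polynomial space.
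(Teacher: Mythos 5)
The paper contains no proof of this lemma: it is stated as a known result with a citation to \cite{BodlaenderGNS22}, so there is no in-paper argument to compare against, and your expectation on this point was right. Your reconstruction follows the same strategy as the cited proof and as the Cook-style machine simulations this paper itself uses in Theorems~\ref{thm:nBInf} and~\ref{thm:tBInf}. The membership half is correct and essentially complete; the only small point to make explicit is that a solution $W$ may be thinned, without loss of generality, to exactly one vertex per colour per layer (cliques are closed under taking subsets and the covering condition survives), which licenses your guess of exactly $k$ vertices per layer and the $O(k\log|x|)$-bit two-layer window. For hardness, you correctly identify both the natural source problem (the machine-acceptance problem of \cite{ElberfeldST15}) and the genuine obstacle, and your proposed resolution is the standard one: split the $f(k)\log|x|$-bit configuration into $f(k)$ blocks of $\log|x|$ bits, let colour $j$ index block $j$, and let each vertex (an index into a polynomial-size set) carry one block together with the global head-position and state annotations, so that $k$ selected vertices jointly determine one configuration; the number of layers equals the time bound $f(k)|x|^c$, which is polynomial and hence unproblematic for a pl-reduction. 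One correction: your claim that the graph \emph{only} connects vertices in consecutive layers misreads the definition --- since $|i-j|\leq 1$ includes $i=j$, edges within a layer exist and the clique condition on $W\cap(V_i\cup V_{i+1})$ constrains intra-layer pairs; indeed your own hardness sketch relies on exactly these intra-layer edges to enforce agreement on the head/state annotations across the blocks of a single configuration, so this is a misstatement rather than a flaw in the construction. As you acknowledge, a complete proof would still require writing out the transition-relation edge set and verifying parameterized-logspace computability of the reduction, but the outline is a faithful account of how the cited result is obtained.
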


\section{Parameterization by the Number of Variables}\label{sec:numvar}

While the number of variables of the network is an unsuitable parameter in practice, the results in this section are a useful reminder of the importance of the size of the probability distributions, which is what prevents tractability with respect to this parameter. In what follows we will first consider the special case of \textsc{Positive Inference} (i.e.~whether the probability is non-zero) before moving on to the general problem of \textsc{Bayesian Inference}.

\subsection{\textsf{W[1]}-completeness of \textit{n}-\textsc{Positive Inference}}

Here we show that $n$-\textsc{Positive Inference}, formally described below, is $\mathsf{W[1]}$-complete.\\
\\
$n$-\textsc{Positive Inference}\\
\textbf{Input:} A Bayesian network $\mathcal{B} = (G,\text{Pr})$, two sets of variables $\mathbf{H},\mathbf{E}\subseteq V$ along with joint value assignments $\mathbf{h}\in\Omega(\mathbf{H})$ and $\mathbf{e}\in\Omega(\mathbf{E})$.\\
\textbf{Parameter:} The number of variables $n = |V|$.\\
\textbf{Question:} Does $\text{Pr}(\mathbf{h}\mid\mathbf{e})>0$ hold?

\begin{theorem}\label{thm:nPInf}
$n$-\textsc{Positive Inference} is $\mathsf{W[1]}$-complete.
\end{theorem}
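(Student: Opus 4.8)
The plan is to establish the two directions of completeness separately: membership in $\mathsf{W[1]}$ via the machine characterisation of Definition~\ref{def:W1}, and $\mathsf{W[1]}$-hardness by an fpt-reduction from $k$-\textsc{Clique}, which is $\mathsf{W[1]}$-complete by Lemma~\ref{lem:kClique}.

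For membership I would first observe that $\text{Pr}(\mathbf{h}\mid\mathbf{e})>0$ holds exactly when there is a joint value assignment to all of $V$ that is consistent with both $\mathbf{h}$ and $\mathbf{e}$ and to which the network assigns positive probability; here the inequality $\text{Pr}(\mathbf{e})\geq\text{Pr}(\mathbf{h},\mathbf{e})$ makes the denominator automatically positive whenever the numerator is, so conditioning causes no difficulty. Such an assignment can be guessed using $\sum_{v\in V}\lceil\log|\Omega(v)|\rceil\leq n\lceil\log|x|\rceil$ non-deterministic bits, which is $f(n)\log|x|$ for $f(n)=n$, matching the budget of Definition~\ref{def:W1}. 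To decide acceptance it then suffices to check that the guessed assignment agrees with $\mathbf{h}$ and $\mathbf{e}$ and that for every variable $v$ the single relevant conditional probability $\text{Pr}(x_v\mid x_{\mathrm{pa}(v)})$, indexed by the values of $v$ and its parents $\mathrm{pa}(v)$, is strictly positive. During a deterministic preprocessing phase the machine can tabulate, for each variable and each local configuration of it and its parents, whether the corresponding table entry is positive, so that after the guess each of the $n$ checks reduces to consulting precomputed data.

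For hardness, given an instance $(G,k)$ of $k$-\textsc{Clique} with $G=(V_G,E_G)$, I would build a network on $k+\binom{k}{2}$ variables. There are $k$ parentless selector variables $X_1,\dots,X_k$, each with domain $V_G$ and a positive (e.g.~uniform) prior, so that a joint assignment to them picks $k$ vertices with repetition. For each pair $i<j$ there is a binary check variable $Y_{ij}$ with parents $X_i,X_j$, whose table is defined so that $\text{Pr}(Y_{ij}=1\mid X_i=u,X_j=w)=1$ when $(u,w)\in E_G$, which for a simple graph forces $u\neq w$, and equals $0$ otherwise. Setting $\mathbf{E}=\emptyset$ and taking $\mathbf{H}=\{Y_{ij}:i<j\}$ with $\mathbf{h}$ assigning $1$ to every $Y_{ij}$, we get $\text{Pr}(\mathbf{h})>0$ precisely when some choice of vertices makes all pairs adjacent, i.e.~when $G$ has a $k$-clique. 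Each table has size at most $|V_G|^2$, the construction is computable in polynomial time, and the number of variables is $\binom{k+1}{2}=g(k)$, so this is a valid fpt-reduction (indeed a pl-reduction).

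The hardness gadget and the probabilistic bookkeeping are routine; the delicate point is the membership argument. The non-trivial requirement of Definition~\ref{def:W1} is not the number of non-deterministic bits but the tail-restriction: dropping it would only yield membership in a larger class such as $\mathsf{W[P]}$. Hence the main obstacle is to arrange the computation so that, following the deterministic preprocessing, the guessing together with the positivity checks genuinely fit within the last $f(n)\log|x|$ steps, which is exactly where the precomputed positivity tables are needed, mirroring the way adjacency lookups are handled in the corresponding membership argument for $k$-\textsc{Clique}.
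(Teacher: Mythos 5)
Your hardness direction is correct and essentially coincides with the paper's: selector variables uniformly distributed over the vertex set, plus binary pairwise check variables that are True with probability $1$ exactly on edges. (The paper additionally funnels all $X_{i,j}$ into a single AND-variable $X_C$ with a $2^{k^2}$-entry table and asks $\text{Pr}(X_C=\text{True})>0$; your choice of putting all check variables into $\mathbf{H}$ directly is an equally valid, slightly leaner variant.)

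The genuine gap is in your membership argument, at exactly the point you flag as delicate but then do not resolve. Under Definition~\ref{def:W1}, \emph{everything} from the first non-deterministic step to halting must fit within the last $f(n)\log|x|$ steps, so your $n$ positivity checks must run in that window too. On a Turing machine, ``consulting precomputed data'' whose location depends on the guessed assignment cannot be done in $O(\log|x|)$ steps: the head must physically travel to a position of a polynomial-length tape determined by the guess, which costs polynomial time, and the number of possible guesses (up to $|x|^n$) is far too large to pre-position all answers near the head during the deterministic phase. There is also no ``corresponding membership argument for $k$-\textsc{Clique}'' in this model to mirror --- $k$-\textsc{Clique}'s membership in $\mathsf{W[1]}$ is standardly shown via the weft-one circuit definition or a reduction to \textsc{Weighted 2-Satisfiability}, and this random-access obstruction is precisely why the machine characterisation of $\mathsf{W[1]}$ in the work of Chen, Flum and Grohe is most naturally formulated over random-access machines, where a guessed value can index a direct-address table in constant time and your guess-and-check would indeed go through. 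As written, though, the membership half is asserted rather than proved. Note that the paper avoids machine-model subtleties entirely by proving membership through an fpt-reduction \emph{to} $k$-\textsc{Clique}: for each variable $X$ it creates one vertex per tuple of assignments to $\pi_+(X)$ receiving non-zero probability in the conditional probability table, joins vertices of distinct variables whenever their tuples are consistent, and deletes vertices inconsistent with $\mathbf{h}$ or $\mathbf{e}$; then $\text{Pr}(\mathbf{h}\mid\mathbf{e})>0$ holds if and only if the resulting graph has an $n$-clique. Adopting that route, or switching your argument to the NRAM characterisation, would close the gap.
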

\begin{proof}Our proof proceeds by giving fpt-reductions from and to the $\mathsf{W[1]}$-complete problem of $k$-\textsc{Clique} to show $\mathsf{W[1]}$-hardness and membership respectively.
\\
\\
For the reduction from $k$-\textsc{Clique} we create $k$ variables $X_1,\ldots,X_k$ each of which is uniformly distributed over the $n$ vertices of the graph $G$. For every pair $X_i,X_j$ we create an additional binary variable $X_{i,j}$ as its common child, such that $\text{Pr}(X_{i,j}= \text{True} \mid X_i=u,\ X_j=v) = 1$ if $(u,v)\in E$ and $\text{Pr}(X_{i,j}= \text{True} \mid X_i=u,\ X_j=v) = 0$ otherwise. This adds $\binom{k}{2}\leq k^2$ additional variables, each with conditional probability tables with $n^2$ entries.\\
\\
Finally, we create a binary variable $X_C$ with all variables $X_{i,j}$ as its parents, such that $X_C$ is True with probability 1 whenever all $X_{i,j}$ are True, and False with probability 1 otherwise. This means that $X_C$ will have a conditional probability table with at most $2^{k^2}$ entries. Thus we have created a Bayesian network with $O(k^2)$ variables and total description size $O(2^{k^2} + k^2n^2)$ with the property that $\text{Pr}(X_C=\text{True})>0$ if and only if $G$ has a $k$-clique.
\\
\\
For the reduction to $k$-\textsc{Clique} we create for every variable $X$ a vertex for each tuple of assignments to $\pi_+(X)$ which is assigned non-zero probability in its conditional probability table. Let $S(X)$ be the set of these vertices. If $X,Y$ are two different variables, add an edge between $u\in S(X)$ and $v\in S(Y)$ unless there is some variable $Z\in \pi_+(X)\cap\pi_+(Y)$ to which the corresponding tuples assign different values. Now $\text{Pr}(x_1,\ldots,x_n)>0$ implies that $\bigcup_{i=1}^n\{\restr{(x_1,\ldots,x_n)}{\pi_+(X_i)}\}$ is an $n$-clique in the graph constructed in this way.\footnote{Note that the number of variables $n$ is now our parameter $k$ for the resulting graph which serves as an instance of $k$-\textsc{Clique}, whose number of vertices will be in the order of the number of entries across all conditional probability tables in the Bayesian network.}\\
\\
Conversely, an $n$-clique can only arise by choosing one vertex from each set $S(X_i)$ such that the corresponding tuples are consistent and hence give rise to a joint value assignment $(x_1,\ldots,x_n)$ with its probability the product of $n$ non-zero probabilities. To restrict to the question whether $\text{Pr}(\mathbf{H}=\mathbf{h}\mid \mathbf{E}=\mathbf{e}) > 0$, we furthermore remove all vertices from the graph for which the corresponding tuple assigns different values to $\mathbf{H}$ and $\mathbf{E}$ than those specified. 
\end{proof}

We now move on to the general problem of \textsc{Bayesian Inference}.

\subsection{\textsf{W[1]PP}-completeness of \textit{n}-\textsc{Bayesian Inference}}

We begin by introducing the $\mathsf{PP}$-equivalent of $\mathsf{W[1]}$, which we call $\mathsf{W[1]PP}$ following the naming convention proposed in \cite{ElberfeldST15}, or alternatively $\mathsf{PFPT[1]}$ according to the notation of \cite{montoyamuller}.

\begin{definition}\label{def:W1PP}
$\mathsf{W[1]PP}$ is the class of parameterized decision problems $k$-$D$ for which there exists a computable function $f$, a constant $c$ and a probabilistic Turing machine $\mathcal{M}$ which on input $(x,k)$ halts in time $f(k)|x|^c$, using at most $f(k)\log |x|$ tail-restricted non-deterministic steps, with probability strictly more than $\frac{1}{2}$ of giving the correct answer.
\end{definition}

As is the case for $\mathsf{PP}$, this definition is equivalent to the one where we also allow the probability of rejecting a No-instance to be exactly $\frac{1}{2}$. As we can use this fact to streamline some of the upcoming proofs, we will formally establish this as the following lemma.

\begin{lemma}\label{lem:eqdef1}
Definition~\ref{def:W1PP} yields the same complexity class when $\mathcal{M}$ accepts Yes-instances with probability strictly more than $\frac{1}{2}$ and No-instances with at probability at most $\frac{1}{2}$.
\end{lemma}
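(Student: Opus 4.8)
The plan is to establish the claimed equality of classes by proving containment in each direction, with one inclusion immediate and the other obtained through a bias-shifting construction in the style of the classical equivalence of definitions for $\mathsf{PP}$. One direction is trivial: any machine meeting the requirement of Definition~\ref{def:W1PP}, being correct with probability strictly more than $\frac12$, already accepts Yes-instances with probability $>\frac12$ and accepts No-instances with probability $<\frac12$, hence in particular at most $\frac12$, so it simultaneously witnesses membership under the relaxed condition. It therefore suffices to show that a machine $\mathcal{M}$ which accepts Yes-instances with probability $>\frac12$ and No-instances with probability at most $\frac12$ can be converted into one that strictly separates the two cases.

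The key quantitative observation is that if $\mathcal{M}$ uses $r = f(k)\log|x|$ tail-restricted nondeterministic coin-flip steps, then on any input its acceptance probability is a dyadic rational of the form $a/2^r$. Consequently, on a Yes-instance the acceptance probability is not merely larger than $\frac12$ but in fact at least $\frac12 + 2^{-r}$, whereas on a No-instance it is at most $\frac12$. I would assume $r \ge 2$ without loss of generality, padding $\mathcal{M}$ with dummy coin flips if necessary; this merely rewrites $a/2^r$ with a larger denominator and preserves both bounds.

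I would then construct a machine $\mathcal{M}'$ which first flips $r$ fresh coins as a \emph{gate}: if all of them come up $0$, an event of probability exactly $2^{-r}$, it rejects outright, and otherwise it simulates $\mathcal{M}$ on independent randomness and returns $\mathcal{M}$'s verdict. The acceptance probability of $\mathcal{M}'$ is thus $(1 - 2^{-r})$ times that of $\mathcal{M}$. On a No-instance this is at most $(1 - 2^{-r})/2 < \frac12$, yielding the desired strict separation, while on a Yes-instance a short calculation gives $(1 - 2^{-r})(\tfrac12 + 2^{-r}) = \tfrac12 + 2^{-r}(\tfrac12 - 2^{-r}) > \tfrac12$, precisely because $r \ge 2$ forces $\tfrac12 - 2^{-r} > 0$. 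Since $\mathcal{M}'$ uses $2r$ nondeterministic steps and only polynomially more time than $\mathcal{M}$, it respects the resource bounds of Definition~\ref{def:W1PP} with $f$ replaced by $2f$.

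The main obstacle I anticipate is not the probability bookkeeping but verifying that $\mathcal{M}'$ still meets the \emph{tail-restricted} nondeterminism requirement: the gate test ``are all $r$ coins equal to $0$?'' and the logical combination of the gate outcome with $\mathcal{M}$'s verdict must be performed without any deterministic computation following the nondeterministic steps. I would address this by folding the gate bookkeeping into the finite-state control, letting a single bit of the state record whether a $1$ has yet been seen, so that the eventual accept or reject decision is encoded directly in the state reached after the final guess. As $\mathcal{M}$'s own nondeterministic steps already constitute the tail of its computation, inserting the gate coins immediately before them keeps all nondeterminism of $\mathcal{M}'$ confined to its last $2r$ steps, as required.
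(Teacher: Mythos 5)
Your proof is correct and takes essentially the same route as the paper's: both arguments rest on the granularity observation that $r = f(k)\log|x|$ binary nondeterministic steps force a Yes-instance acceptance probability of at least $\tfrac12 + 2^{-r} = \tfrac12 + |x|^{-f(k)}$, and both then nudge the overall acceptance probability slightly downward using $O(r)$ additional coin flips placed within the nondeterministic tail. The only cosmetic difference is that the paper applies the bias conditionally (extra flips appended only to accepting branches, turning an acceptance into a rejection with probability $\tfrac12|x|^{-f(k)}$), whereas you scale unconditionally by $1-2^{-r}$ via a gate inserted just before the tail; your explicit care that the gate bookkeeping lives in the finite control and does not break tail-restriction corresponds to the paper's brief closing remark on the same point.
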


\begin{proof}
Suppose we have such a machine $\mathcal{M}$. Since $\mathcal{M}$ takes at most $f(k)\log|x|$ non-deterministic steps, we know that its probability of accepting Yes-instances will be at least $\frac{1}{2} + |x|^{-f(k)}$. We can modify $\mathcal{M}$ to obtain a machine $\mathcal{M}'$ which works like $\mathcal{M}$ up until it would halt. At this point, $\mathcal{M}'$ preserves rejections, but where $\mathcal{M}$ would accept it takes another $f(k)\log|x| + 1$ non-deterministic steps in order to achieve a probability of $\frac{1}{2}|x|^{-f(k)}$ of changing the acceptance into a rejection. As a result, $\mathcal{M}'$ accepts Yes-instances with probability at least $\frac{1}{2}+\frac{1}{4}|x|^{f(k)}$ and No-instances with probability at most $\frac{1}{2}-\frac{1}{4}|x|^{f(k)}$. Moreover, this modification does not violate the limitations of the tail-restricted bounded non-determinism, which shows the two definitions to be equivalent.
\end{proof}

Another canonical $\mathsf{W[1]}$-complete problem is \textsc{Short Turing Machine Acceptance} -- we again refer to \cite{downeyfellows2} for a proof. We show next that the analogous problem of \textsc{Short Turing Machine Majority Acceptance} is $\mathsf{W[1]PP}$-complete, so that we can subsequently use it in a reduction in order to demonstrate $\mathsf{W[1]PP}$-hardness.\\
\\
\textsc{Short Turing Machine Majority Acceptance (STMMA)}\\
\textbf{Input:} A non-deterministic Turing machine $\mathcal{M}$, a string $x$, an integer $k$.\\
\textbf{Parameter:} $k$.\\
\textbf{Question:} Does $\mathcal{M}$ accept on $x$ within $k$ steps with probability strictly greater than $\frac{1}{2}$?

\begin{proposition}\label{prop:STMMA}
\textsc{STMMA} is $\mathsf{W[1]PP}$-complete.
\end{proposition}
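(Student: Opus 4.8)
The plan is to establish the two inclusions separately: membership of \textsc{STMMA} in $\mathsf{W[1]PP}$ by a direct probabilistic simulation, and $\mathsf{W[1]PP}$-hardness by an fpt-reduction from an arbitrary $\mathsf{W[1]PP}$ machine, following the same pattern that makes \textsc{Short Turing Machine Acceptance} complete for $\mathsf{W[1]}$ but now tracking acceptance \emph{probabilities} rather than the mere existence of an accepting run.

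For membership, given an instance $(\mathcal{M},x,k)$ I would build a probabilistic machine $\mathcal{M}^\ast$ which first deterministically reads the input and prepares fast-access copies of $\mathcal{M}$'s transition relation and of the initial configuration of $\mathcal{M}$ on $x$, and then enters a randomized phase simulating $\mathcal{M}$ for $k$ steps, resolving each nondeterministic choice of $\mathcal{M}$ by fair coin flips. Accepting exactly when the simulated run accepts makes the acceptance probability of $\mathcal{M}^\ast$ equal to that of $\mathcal{M}$, so the threshold $\tfrac12$ is preserved. The only care needed is to confine all randomness to the tail: the simulation uses $O(k\log|\mathcal{M}|)\le f(k)\log|x|$ random bits, and with the precomputed lookup structures the whole randomized phase fits within the last $f(k)\log|x|$ steps, as required by Definition~\ref{def:W1PP}.

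For hardness, let $\mathcal{N}$ witness membership of some problem in $\mathsf{W[1]PP}$ on an input $(x,k)$ with $n=|x|$; by Lemma~\ref{lem:eqdef1} I may assume it accepts Yes-instances with probability $>\tfrac12$ and No-instances with probability $\le\tfrac12$. Since only the last $f(k)\log n$ steps of $\mathcal{N}$ are nondeterministic, its deterministic prefix has a unique outcome, which the reduction computes in fpt time; I then extract the state and the $O(f(k)\log n)$-cell tape window that the nondeterministic tail can possibly touch, discarding the rest of the (large) configuration. The target instance is a short machine $\mathcal{M}'$ simulating this tail. To beat the $\log n$ factor separating the tail's length from the allowed number of steps, I would encode each block of $\log n$ consecutive cells as a single symbol over a $\mathrm{poly}(n)$-size alphabet and precompute, again in fpt time, a transition table advancing $\mathcal{N}$ by $\log n$ steps at once; one round then costs $O(1)$ steps of $\mathcal{M}'$, while the $\log n$ coin flips of that round are produced by a single uniform nondeterministic step branching into the $2^{\log n}=n$ possibilities. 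Thus $\mathcal{M}'$ runs in $k'=O(f(k))$ steps, a bound depending on $k$ alone, and the reduction is fpt.

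The main obstacle is the $\mathsf{PP}$-specific requirement that the reduction preserve the acceptance probability \emph{exactly across the $\tfrac12$ threshold}, rather than merely preserving existence of an accepting path as in the $\mathsf{W[1]}$ case. To handle this I would pad $\mathcal{N}$'s tail so that every computation makes exactly $f(k)\log n$ fair binary choices, and have $\mathcal{M}'$ make precisely $f(k)$ uniform $n$-way guesses, one per round; every root-to-leaf path of $\mathcal{M}'$ then carries probability $2^{-f(k)\log n}$ and corresponds bijectively to a choice sequence of $\mathcal{N}$'s tail with the same verdict. Consequently the probability that $\mathcal{M}'$ accepts $x'$ within $k'$ steps equals the acceptance probability of $\mathcal{N}$, so the constructed instance is a Yes-instance of \textsc{STMMA} precisely when the original instance is a Yes-instance. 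Getting the branching factors and the padding to line up so that the two path distributions match bit-for-bit -- and verifying that the block-transition table is genuinely $\mathrm{poly}(n)$-sized and fpt-computable -- is where the real work lies; the non-probabilistic skeleton is otherwise exactly the classical argument.
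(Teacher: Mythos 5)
Your proposal is correct and follows essentially the same route as the paper's proof: membership by directly simulating the $k$ steps of $\mathcal{M}$ with $O(k\log n)$ tail-restricted random bits, and hardness by deterministically simulating the $\mathsf{W[1]PP}$ machine up to its non-deterministic tail and then compressing the alphabet and transition function so the tail runs in $g(k)$ steps. You additionally spell out the probability-preserving bookkeeping (padding to exactly $f(k)\log n$ binary choices and the path bijection) that the paper leaves implicit in its ``simulation followed by compression'' argument, which is a welcome refinement rather than a deviation.
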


\begin{proof} 
For membership we simulate the first $k$ steps of $\mathcal{M}$ on $x$ using $O(k \log n)$ steps of non-determinism, accepting in case $\mathcal{M}$ accepted $x$ within that time and rejecting otherwise.\\
\\
To establish hardness, we consider the majority acceptance problem for an arbitrary $\mathsf{W[1]}$-machine $\mathcal{M}$ (i.e~a Turing machine which terminates in $f(k)|x|^c$ steps with $f(k)\log|x|$ tail-restricted non-deterministic steps) on input $x$. The parameterized reduction from this problem to \textsc{STMMA} can be characterised in terms of \textit{simulation} followed by \textit{compression}.\\
\\
Since $\mathcal{M}$'s non-determinism is tail-restricted, it is possible to deterministically simulate its execution on $|x|$ as part of the parameterized reduction up until the point where the computation enters its non-deterministic tail of $f(k)\log |x|$ steps. To eliminate this factor $\log|x|$, the second part of the reduction compresses $\mathcal{M}$'s alphabet and transition functions, so that the resulting machine needs only $g(k)$ steps to complete the computation.\\
\\
Thus the parameterized reduction transforms $\mathcal{M},x$ into an instance $\mathcal{M}',x'$ of \textsc{STMMA} where $\mathcal{M}'$ is a compressed version of $\mathcal{M}$ initialised to match $\mathcal{M}'s$ state upon entering the non-deterministic tail, and similarly $x'$ corresponds to the compressed tape contents at that point, with the parameter value being the number of non-deterministic steps $g(k)$. Given that the simulation and compression are fixed-parameter tractable and the correctness is directly ensured through its construction, this yields a valid fpt-reduction. 
\end{proof}

At this point one may wonder why we did not consider the majority variant of a satisfiability problem, as is perhaps more commonly done. Indeed, given that \textsc{Weighted 2-Satisfiability} is canonically $\mathsf{W[1]}$-complete, surely we could have consider \textsc{Weighted Majority 2-Satisfiability} instead? The issue however is that this problem appears not to be $\mathsf{W[1]PP}$-hard.\footnote{A similar failure of a majority satisfiability problem to be hard for what would initially appear to be the corresponding parameterized complexity class was observed in \cite{donselaar2019}.} We believe that the recent results on \textsc{Majority $k$-Satisfiability} in \cite{akmalwilliams} and subsequent work point to an underlying reason for this failure. This may possibly be addressed in future work -- for now, we proceed with establishing the $\mathsf{W[1]PP}$-completeness of $n$-\textsc{Bayesian Inference}, which refers to the following problem.\\
\\
$n$-\textsc{Bayesian Inference}\\
\textbf{Input:} A Bayesian network $\mathcal{B} = (G,\text{Pr})$, two sets of variables $\mathbf{H},\mathbf{E}\subseteq V$ along with joint value assignments $\mathbf{h}\in\Omega(\mathbf{H})$ and $\mathbf{e}\in\Omega(\mathbf{E})$ and a probability threshold $q\in\mathbb{Q}$. We require that all probabilities (including $q$) are represented as a unary fraction, and furthermore that the probabilities of any specific distribution have the same denominator.\\
\textbf{Parameter:} The number of variables $n = |V|$.\\
\textbf{Question:} Does $\text{Pr}(\mathbf{h}\mid\mathbf{e})>q$ hold?
\begin{theorem}\label{thm:nBInf}
$n$-\textsc{Bayesian Inference} is $\mathsf{W[1]PP}$-complete.
\end{theorem}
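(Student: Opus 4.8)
The plan is to prove both directions, membership in $\mathsf{W[1]PP}$ and $\mathsf{W[1]PP}$-hardness. For membership I would build a probabilistic machine that first samples a complete joint assignment $\omega$ of the $n$ variables with probability exactly $\text{Pr}(\omega)$ (by guessing each variable's value along a topological order using the cumulative CPT entries), and then accepts with a probability calibrated to the query. Writing $A=\text{Pr}(\mathbf{h},\mathbf{e})$ and $E=\text{Pr}(\mathbf{e})$, and assuming $q\in[0,1]$ (the cases $q<0$ and $q\geq 1$ being trivial), the rearrangement $\text{Pr}(\mathbf{h}\mid\mathbf{e})>q \iff A-qE>0$ dictates the rule: accept with probability $1-\tfrac{q}{2}$ when $\omega$ is consistent with both $\mathbf{h}$ and $\mathbf{e}$, with probability $\tfrac{1-q}{2}$ when $\omega$ is consistent with $\mathbf{e}$ but not $\mathbf{h}$, and with probability $\tfrac{1}{2}$ (a fair coin) otherwise. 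A short calculation shows that the overall acceptance probability then equals $\tfrac{1}{2}+\tfrac{1}{2}(A-qE)$, which exceeds $\tfrac{1}{2}$ exactly when $\text{Pr}(\mathbf{h}\mid\mathbf{e})>q$; the degenerate case $E=0$ gives acceptance probability exactly $\tfrac{1}{2}$, so appealing to Lemma~\ref{lem:eqdef1} lets me correctly classify it as a No-instance. Since each distribution uses a polynomially bounded common denominator encoded in unary, the sampling and the calibrated acceptance together consume only $O(n\log|x|)$ non-deterministic bits, while the CPTs and their cumulative sums are assembled in a deterministic polynomial-time preprocessing phase.

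The step I expect to be the main obstacle is fitting this procedure into the tail-restricted regime of Definition~\ref{def:W1PP}, where every non-deterministic step must lie within the final $f(n)\log|x|$ steps and no polynomial-time verification is permitted afterwards. I would handle this exactly as in the proof of Proposition~\ref{prop:STMMA}, by compressing the preprocessed tables into a representation that allows the non-deterministic tail to perform the topological-order sampling together with the consistency test within $f(n)\log|x|$ steps; equivalently, the whole procedure can be packaged as an fpt-reduction to \textsc{STMMA}, whose membership in $\mathsf{W[1]PP}$ then yields the claim.

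For hardness I would give an fpt-reduction from \textsc{STMMA}. Given a non-deterministic machine $\mathcal{M}$, a string $x$, and a step budget $k$, I would build a Bayesian network simulating the first $k$ steps of $\mathcal{M}$ on $x$. Such a run inspects only an $O(k)$-cell window of the tape and passes through $k+1$ configurations, so it suffices to introduce state variables $S_t$ and head-position variables $H_t$ for each time $t\leq k$, together with cell-content variables $C_{t,i}$ over this window. After normalising $\mathcal{M}$ so that every step branches into a fixed number $b$ of successors, each choice has probability $1/b$, which respects the unary common-denominator encoding, and the CPTs then encode $\mathcal{M}$'s transition function; a final indicator variable $Z$ records whether an accepting state is reached by time $k$. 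Care is needed to keep every table of polynomial size: the symbol under the head is read off by auxiliary selection gadgets (so that $S_{t+1},H_{t+1}$ and each $C_{t+1,i}$ depend only on a bounded number of parents), which is also why the network ends up with $O(k^2)$ variables and hence $n\leq g(k)$ as required.

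Taking $\mathbf{H}=\{Z\}$ with value True, empty evidence $\mathbf{E}$, and threshold $q=\tfrac{1}{2}$, the query $\text{Pr}(Z=\text{True})>\tfrac{1}{2}$ holds precisely when $\mathcal{M}$ accepts $x$ within $k$ steps with probability strictly greater than $\tfrac{1}{2}$. As the network is computable in time polynomial in $k$ and $|x|$, this is a valid fpt-reduction, and together with membership it establishes $\mathsf{W[1]PP}$-completeness.
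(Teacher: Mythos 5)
Your proposal is correct and follows essentially the same route as the paper: membership via forward sampling along a topological order (exploiting the unary common-denominator encoding) followed by calibrated acceptance, and hardness via a Cook-style $O(k^2)$-variable tableau simulation of \textsc{STMMA} with threshold $q=\tfrac{1}{2}$ and no evidence. Minor differences all favour you slightly: your single acceptance rule with probabilities $1-\tfrac{q}{2}$, $\tfrac{1-q}{2}$, $\tfrac{1}{2}$ (giving acceptance probability $\tfrac{1}{2}+\tfrac{1}{2}(\Pr(\mathbf{h},\mathbf{e})-q\Pr(\mathbf{e}))$) avoids the paper's case split between $q\geq\tfrac{1}{2}$ and $q<\tfrac{1}{2}$, your appeal to Lemma~\ref{lem:eqdef1} for the degenerate case $\Pr(\mathbf{e})=0$ and your explicit handling of the tail-restriction (packaging the sampler as an fpt-reduction to \textsc{STMMA} via Proposition~\ref{prop:STMMA}) make precise points the paper leaves implicit, and your tableau (centralized state and head variables with bounded-fan-in selection gadgets) differs only cosmetically from the paper's distributed $S_{i,j},T_{i,j},D_j$ construction.
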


\begin{proof}
For membership, we provide a straightforward and commonly-used argument which generates a single instantiation of the network by means of forward sampling.\\
\\
We sample each variable according to a topological ordering by randomly generating an assignment according to the local probability distribution (conditioned on the relevant previously generated assignments where necessary). That is, if we wish to sample from a probability distribution $\text{Pr}(X)$ such that for every $x_i$ its probability is stored as a unary fraction with denominator $D$, we non-deterministically generate an integer $s\in[1,D]$ and take the assignment $x_k$ where $k$ is the largest value such that $\Sigma_{i<k}\text{Pr}(x_i) < \frac{s}{D} \leq \Sigma_{i\leq k}\text{Pr}(x_i)$. Because of the unary representation of $D$, generating $s$ can be done using logarithmic space as required. Once this procedure is completed, we have a full instantiation of the network.\\
\\
In the case without evidence we accept with probability $1-\frac{q}{2}$ if the sample agrees with $\mathbf{h}$, and with probability $\frac{1}{2}-\frac{q}{2}$ if it does not, which requires additional non-deterministic steps based on the size of $q$. This process leads to a probability of acceptance of $\frac{1}{2} + \frac{1}{2}(\text{Pr}(\textbf{h})-q)$, hence we accept with probability more than $\frac{1}{2}$ precisely when $\text{Pr}(\textbf{h}) > q$ as required.\\
\\
For the case with evidence, we need to distinguish between $q\geq\frac{1}{2}$ and $q < \frac{1}{2}$. If $q\geq\frac{1}{2}$, we accept with probability $\frac{1}{2q}$ if the sample agrees with both $\mathbf{h}$ and $\mathbf{e}$ and with probability $\frac{1}{2}$ whenever the sample does not agree with $\mathbf{e}$, rejecting otherwise. This leads to a probability of accepting equal to $\frac{1}{2}+\frac{\text{Pr}(\mathbf{e})}{2q}(\text{Pr}(\mathbf{h}\mid\mathbf{e})-q)$.
If instead $q < \frac{1}{2}$, we accept with probability 1 if the sample agrees with both $\mathbf{h}$ and $\mathbf{e}$, with probability $\frac{1-2q}{2-2q}$ if the sample agrees with $\mathbf{e}$ but not with $\mathbf{h}$, and with probability $\frac{1}{2}$ whenever the sample does not agree with $\mathbf{e}$. This leads to a probability of accepting equal to $\frac{1}{2}+\frac{\text{Pr}(\mathbf{e})}{2-2q}(\text{Pr}(\mathbf{h}\mid\mathbf{e})-q)$. Under both circumstances we again accept with probability more than $\frac{1}{2}$ precisely when $\text{Pr}(\textbf{h}\mid\mathbf{e}) > q$ as required.\\
\\
To establish hardness we shall provide a Cook-style construction similar to the one used in \cite{donselaar2019}, reducing from the problem \textsc{STMMA}. Since we essentially want to simulate the machine $\mathcal{M}$ for at most $k$ steps, as a consequence we will use at most $k$ space. Thus we can explicitly represent the full computation by tracking the $k$ tape cells across all $k+1$ time steps for a total of $O(k^2)$ variables. We describe the network in more detail below.\\
\\
First of all, we have for every $i\in [0,k]$ and $j\in [1,k]$ a node $S_{i,j}$ which tracks for the $i$-th time step whether the tape head is currently at the $j$-th tape cell, what the contents of this cell are, and which state the machine is locally ``believed'' to be in. For all $j\in [1,k]$ and $i\neq k$, the node $S_{i,j}$ has a single child $T_{i,j}$ which stores the new contents and state, and furthermore an instruction for moving the head out of $\{-,\text{left},\text{stay},\text{right}\}$. Each node $T_{i,j}$ has up to three nodes as its children, namely $S_{i+1,j-1}$, $S_{i+1,j}$ and $S_{i+1,j+1}$. Finally, the nodes $S_{k,j}$ have a single child $D_j$ which can be either True or False; any $D_j$ is also a parent of $D_{j+1}$.\\
\\
The nodes $S_{0,j}$ are initialised with the correct information, i.e.~contents, head location and state. If $S_{i,j}$ indicates that the head is not currently at that cell, the distribution for $T_{i,j}$ is such that it will deterministically copy the symbol $\sigma$ and state $A$ and add the instruction $-$. However, if $S_{i,j}$ indicates that the head is currently at that cell, $T_{i,j}$ will randomly select one of the possible transitions by taking on the resulting symbol $\sigma'$, state $A'$ and instruction from $\{\text{left},\text{stay},\text{right}\}$ (or simply `stay' in the accepting state).\\
\\
For any $i\neq 0$, the distribution for $S_{i,j}$ is defined first of all to copy over the symbol from $T_{i-1,j}$. Furthermore, if $T_{i-1,j-1}$ contains the instruction `right', $T_{i-1,j}$ contains `stay' or $T_{i-1,j+1}$ contains `left', $S_{i,j}$ will note the head is now at its location and copy over the state information from the corresponding $T$-variable. In all other cases, $S_{i,j}$ will note the head is not currently at its location and copy the state from $T_{i-1,j}$.\\
\\
The preceding construction only guarantees that the state information is correct at nodes where the head is located at that time. Thus $D_{j+1}$ is defined to be True precisely when $S_{k,j}$ has the head and is in the accepting state \emph{or} $D_{j-1}$ is True (if it exists), and False otherwise. This ensures that $\text{Pr}(D_k = \text{True})$ is the probability of $\mathcal{M}$ accepting on $x$ within $k$ steps, hence to complete the reduction we need to set $q=\frac{1}{2}$. To see that this is an fpt-reduction, note that the graph has $2k(k+1)$ variables with indegree at most 3 and which take a value out of at most twice the number of states times the size of the tape alphabet many options.     
\end{proof}

\section{Parameterization by the Topological Vertex Separation Number}\label{sec:tvsn}

For an intuition behind the topological vertex separation number, suppose we want to efficiently use working memory while generating a sample of a Bayesian network. One trick we can use is to only keep the sampled value of a particular variable in working memory until we assigned values to all of its dependent variables, at which point we can write it off to storage memory. This is essentially how a logspace transducer operates: it has a read-only input tape, a read/write work tape of logarithmic size, and a write-only output tape. Now the topological vertex separation number of a directed acyclic graph expresses the upper bound on the number of variables for which we need to have its value stored at any point while generating the sample. As we shall demonstrate in this section, we can use it to derive a parameterized logspace bound on the space required to perform inference.

\begin{definition}\label{def:t}
Let $G = (V,A)$ be a directed acyclic graph. Given a topological ordering $T$ of $V$, define $V_T(i) = \{u\in V\mid T(u) < i \wedge \exists v((u,v)\in A \wedge T(v)\geq i)\}$. Let $t_T(G) = \text{max}_i|V_T(i)|$. The \emph{topological vertex separation number} $t(G)$ of $G$ is given by $\text{min}_Tt_T(G)$.
\end{definition}

This definition translates that of the vertex separation number for an undirected graph to directed acyclic graphs, with the added requirement that the permutation is a topological ordering. Note that the usual definition of vertex separation requires $T(u) \leq i$ and $T(v) > i$ instead: writing $W_T(i)$ for the corresponding set, we see $W_T(i) = V_T(i+1)$. As a consequence, the topological vertex separation number of a directed acyclic graph is an upper bound on the vertex separation number of its moralization.\footnote{Suppose the moralization added an edge $(T(i),T(j))$; let $i<j$ without loss of generality, so that $T(i)$ occurs in $W_T(s)$ for all $s\in [i,j-1]$. This edge is only added if there is some $l > j > i$ such that $(T(i),T(l)),(T(j),T(l))\in A$, hence $T(i)$ already occurred in these sets before the moralization.} By \cite{Kinnersley92}, it is known that the vertex separation number of an undirected graph is equal to its pathwidth, and in turn pathwidth is an upper bound on treewidth. Thus all of the hardness results in this section also apply with relation to these graph parameters -- we aim to identify the corresponding complexity classes for the parameterization by treewidth in future work.

\subsection{\textsf{XNLP}-completeness of \textit{t}-\textsc{Positive Inference}}

As in Section~\ref{sec:numvar}, we first study \textsc{Positive Inference} for this parameterization before proceeding with inference in its full generality. That is, we consider the following problem:\\
\\
$t$-\textsc{Positive Inference}\\
\textbf{Input:} A Bayesian network $\mathcal{B} = (G,\text{Pr})$, two sets of variables $\mathbf{H},\mathbf{E}\subseteq V$ along with joint value assignments $\mathbf{h}\in\Omega(\mathbf{H})$ and $\mathbf{e}\in\Omega(\mathbf{E})$.\\
\textbf{Parameter:} The topological vertex separation number $t(G)$.\\
\textbf{Question:} Does $\text{Pr}(\mathbf{h}\mid\mathbf{e})>0$ hold?

\begin{theorem}\label{thm:tPInf}
$t$-\textsc{Positive Inference} is $\mathsf{XNLP}$-complete.
\end{theorem}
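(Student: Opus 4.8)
The plan is to prove $\mathsf{XNLP}$-completeness by establishing membership and hardness separately, mirroring the structure used for the number-of-variables parameterization in Theorem~\ref{thm:nPInf} but now with space as the critical resource. For membership, I would show that $t$-\textsc{Positive Inference} can be decided by a non-deterministic machine running in fpt time and $f(t)\log|x|$ space. The key idea is forward sampling along a topological ordering $T$ that witnesses $t(G)=t$. Processing variables in the order given by $T$, I non-deterministically guess an assignment for each variable from among the value combinations its conditional probability table assigns non-zero probability, consistent with the assignments already guessed for its parents, and I verify along the way that the guesses agree with the specified values on $\mathbf{H}$ and $\mathbf{E}$. The crucial observation is that, by Definition~\ref{def:t}, at the moment we process the variable at position $i$ we only need to retain in working memory the values assigned to the variables in $V_T(i)$ (those already processed but still having an unprocessed child), since a variable's assignment is only needed to constrain its children. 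Because $|V_T(i)| \leq t$ for all $i$, and each stored value takes $O(\log|x|)$ bits, the total work space is $f(t)\log|x|$, placing the problem in $\mathsf{XNLP}$; a positive probability path exists precisely when some sequence of non-deterministic guesses succeeds.

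For hardness I would give a pl-reduction from $k$-\textsc{Chained Multicoloured Clique}, which is $\mathsf{XNLP}$-complete by Lemma~\ref{lem:kCMC}. The chained structure of that problem, where edges only connect consecutive groups $V_i,V_{i+1}$, is exactly what makes it suitable for encoding into a bounded-topological-separation network: the reduction should build a Bayesian network whose topological ordering sweeps through the groups $V_1,\ldots,V_r$ in order, so that at any point only information about the current and previous group's chosen vertices needs to be ``alive.'' Concretely, for each group $V_i$ I would introduce, for each colour $j\in[1,k]$, a variable uniformly distributed over the vertices of colour $j$ in $V_i$, representing the choice of the clique member of that colour. I would then add checking variables that verify the clique condition on $W\cap(V_i\cup V_{i+1})$ --- namely that all pairs of chosen vertices within and between these two consecutive groups are adjacent in $G$ --- forcing these consistency variables to have positive probability only when the adjacency holds, exactly as the binary child variables $X_{i,j}$ enforced edges in the proof of Theorem~\ref{thm:nPInf}. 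Finally a single evidence variable aggregates all the checks so that $\text{Pr}(\mathbf{h}\mid\mathbf{e})>0$ holds if and only if a valid chained multicoloured clique exists.

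The main obstacle, and the step deserving the most care, is controlling the topological vertex separation number of the constructed network so that it is bounded by a computable function of $k$ alone, independent of $r$ and of the sizes of the groups. This requires laying out the variables in a topological order that processes each group's $k$ colour-choice variables and their local checks completely before moving on, carrying forward only the $O(k)$ values needed to test adjacency against the next group --- the chaining ensures no variable needs to remain active across more than two consecutive groups. I would need to verify that the aggregation of the $D_j$-style checking variables (as in the final construction of Theorem~\ref{thm:nBInf}) can likewise be arranged to pass a bounded amount of state along the chain rather than accumulating it. A secondary subtlety is confirming that the reduction itself runs in parameterized logspace, i.e.~that constructing the network and its conditional probability tables requires only $f(k)+O(\log|x|)$ space as demanded by the pl-reduction in Definition~\ref{def:pareduc}; this should follow because the network is highly regular and each local table is determined by reading a bounded window of the input, so the transducer never needs to store more than a constant number of vertex indices at once.
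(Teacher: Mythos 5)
Your proposal is correct in outline, and its hardness half is essentially the paper's proof: the paper likewise reduces from $k$-\textsc{Chained Multicoloured Clique}, introducing variables $X_{i,j}$ uniform over the colour-$j$ vertices of $V_i$, binary edge-checking children $C_{i,j,i',j'}$ for pairs drawn from consecutive groups, and a running AND-chain of $D$-variables rather than a single aggregator --- precisely the arrangement you flag as necessary to keep the separation number bounded (the paper's explicit topological ordering yields $t(G)\leq 2k$, and the network is emitted along that ordering to make the reduction pl-computable). Where you genuinely diverge is membership: the paper does not argue via a machine at all, but gives a second pl-reduction, this time \emph{to} $k$-\textsc{Chained Multicoloured Clique}, building for each position $i$ of a witnessing ordering a padded set $V^*_T(i)$ of $t+1$ coloured variables and a vertex for every non-zero CPT tuple, so that chained multicoloured cliques correspond exactly to positive joint assignments. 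Your direct argument --- a non-deterministic machine that guesses values along the ordering, retaining only the $O(t)$ values of $V_T(i)$ (which contain all parents of the variable being processed) in $f(t)\log|x|$ space --- is sound and arguably more elementary, since $\mathsf{XNLP}$ is defined by machines (Definition~\ref{def:XNLP}) and $\text{Pr}(\mathbf{h}\mid\mathbf{e})>0$ reduces to the existence of one full assignment with all local probabilities non-zero. The trade-off is that the paper's reduction-based membership proof exhibits the combinatorial correspondence between bounded-separation networks and chained cliques in both directions, which is structurally informative. One caveat applies equally to both arguments: each presupposes access to a topological ordering $T$ with $t_T(G)=t$ (you say ``a topological ordering that witnesses $t(G)=t$,'' the paper says ``let $T$ be a topological ordering such that $t_T(G)=t$''), and neither explains how such an ordering is obtained within the stated resource bounds; guessing it on the fly is delicate, since without care a run could process a variable twice and accept unsoundly. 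Since the paper shares this assumption, it is not a defect of your proposal relative to the paper, but it is the one step you should not treat as automatic.
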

\begin{proof} Similar to the proof of Theorem~\ref{thm:nPInf}, we proceed by giving pl-reductions from and to the $\mathsf{XNLP}$-complete problem of $k$-\textsc{Chained Multicoloured Clique}.
\\
\\
For the reduction from $k$-\textsc{Chained Multicoloured Clique}, we create a network as follows. Write $S_{i,j} = \{v\in V_i \mid f(v) = j\}$ and $S_i = \bigcup_{j=1}^k S_{i,j}$. First, we create variables $X_{i,j}$ which are uniformly distributed over $S_{i,j}$. Let $(i,j) < (i',j')$ if $i<i'$ or $i=i'$ and $j<j'$. For every pair of variables $X_{i,j},X_{i',j'}$ from $S_i \cup S_{i+1}$ such that $(i,j) < (i',j')$ we now create a variable $C_{i,j,i',j'}$ such that $\text{Pr}(C_{i,j,i',j'} = \text{True}\mid X_{i,j} = u, X_{i',j'} = v) = 1$ if $(u,v)\in E$ and $\text{Pr}(C_{i,j,i',j'} = \text{True}\mid X_{i,j} = u, X_{i',j'} = v) = 0$ otherwise. We order the variables such that $C_{i,j,i',j'} < C_{a,b,a',b'}$ if $(i',j') < (a',b')$, or $(i',j') = (a',b')$ and $(i,j) < (a,b)$.\\
\\
We then create variables $D_{i,j,i',j'}$, which we order similarly, as ANDs of $C_{i,j,i',j'}$ and the preceding $D$-variable ($D_{1,1,1,2}$ has a single parent $C_{1,1,1,2}$). This construction ensures that $\text{Pr}(D_{r,k-1,r,k} = \text{True}) > 0$ precisely when the original graph contains a chained multicoloured clique. Moreover, consider the following topological ordering $T$:
\begin{itemize}
\item $X_{i,j} <_T X_{i',j'}$ if $(i,j)<(i',j')$.
\item $C_{i,j,i',j'} <_T C_{a,b,a',b'}$ whenever $(i',j') < (a',b')$, or $(i',j') = (a',b')$ and $(i,j) < (a,b)$.
\item $D_{i,j,i',j'} <_T D_{a,b,a',b'}$ whenever $(i',j') < (a',b')$, or $(i',j') = (a',b')$ and $(i,j) < (a,b)$.
\item $X_{i,j} <_T C_{a,b,a',b'}$ (same for $D_{a,b,a',b'}$) if $(i,j) \leq (a',b')$.
\item $C_{i,j,i',j'} <_T X_{a,b}$ (same for $D_{i,j,i',j'}$) whenever $X_{a,b} \not<_T C_{i,j,i',j'}$.
\item $C_{i,j,i',j'} <_T D_{a,b,a',b'}$ whenever $C_{i,j,i',j'} <_T C_{a,b,a',b'}$ or $(i,j,i',j') = (a,b,a',b')$.
\item $D_{i,j,i',j'} <_T C_{a,b,a',b'}$ whenever $C_{a,b,a',b'} \not<_T D_{i,j,i',j'}$.
\end{itemize}
We see that $V_T(T(X_{i,j}))$ consists of those $X_{i',j'}$ such that $(i-1,1)\leq (i',j') < (i,j)$ plus the most recent $D$-variable. Similarly, $V_T(T(C_{i,j,i',j'}))$ and $V_T(T(D_{i,j,i',j'}))$ consist of at most of those $X_{a,b}$ such that $(i'-1,1) \leq (a,b) \leq (i',j')$ plus the preceding $D$-variable. Thus the topological vertex separation number of the underlying graph is $2k$. By constructing the instance according to the given topological ordering, each individual variable can be constructed using at most $O(k\log n)$ space (though obviously it cannot be \emph{represented} using logarithmic space), which means this is indeed a pl-reduction.\\
\\
For the reduction to $k$-\textsc{Chained Multicoloured Clique}, given an instance of $t$-\textsc{Positive Inference}, let $T$ be a topological ordering such that $t_T(G) = t$. For every $i\in[1,n]$, let $V^*_T(i)$ be obtained from $V_T(i)$ by adding the number of copies of the variable $X$ such that $T(X) = i$ which is needed to ensure that $|V^*_T(i)| = t+1$ (to guarantee that at least one copy is added). We number the variables from 1 to $t+1$ according to their order in $T$.\\
\\
For every variable $X$ from $V^*_T(i)$ we create a vertex for each tuple of assignments to $\pi_+(X)$ which is assigned non-zero probability in its conditional probability table. These vertices receive the colour $j$ which is the number just given to $X$ in $V^*_T(i)$. Let $V_i$ be the set of these vertices. For every pair of differently coloured vertices $u,v\in V_i$, add $(u,v)$ to $E$ unless there is some variable to which $u$ and $v$ assign different values. We do the same for every pair of vertices $u\in V_i$ and $v\in V_{i+1}$, regardless of their colours.\\
\\
As a consequence, $\text{Pr}(x_1,\ldots,x_n)>0$ implies that the set which arises from taking for every $X$ the vertices corresponding to the tuple $\restr{(x_1,\ldots,x_n)}{\pi_+(X_i)}$ is a chained multicoloured clique in the graph $(V_1,\ldots,V_n,E)$ with $t+1$ colours. Conversely, any chained multicoloured clique in this graph can only arise in this way. To restrict to the question whether $\text{Pr}(\mathbf{H}=\mathbf{h}\mid \mathbf{E}=\mathbf{e}) > 0$, we once again omit from the graph all vertices for which the corresponding tuple assigns different values to $\mathbf{H}$ and $\mathbf{E}$ than those specified. This is again a pl-reduction as $t$ limits the number of variables on the basis of which we construct an individual vertex or edge of the resulting graph.
\end{proof}

\subsection{\textsf{XLPP}-completeness of \textit{t}-\textsc{Bayesian Inference}}

Our approach here is much like in the previous section: we introduce the class $\mathsf{XLPP}$, show that it is well-behaved, and define a machine acceptance problem which can serve as an intermediate step in the proof that $t$-\textsc{Bayesian Inference} is hard for this class.

\begin{definition}\label{def:XLPP}
$\mathsf{XLPP}$ is the class of parameterized decision problems $k$-$D$ for which there exists a computable function $f$, a constant $c$ and a probabilistic Turing machine $\mathcal{M}$ which on input $(x,k)$ halts in time $f(k)|x|^c$, using at most $f(k)\log |x|$ space, with probability strictly more than $\frac{1}{2}$ of giving the correct answer.
\end{definition}

Once again, this definition is equivalent to the one where we also allow the probability of rejecting a No-instance to be exactly $\frac{1}{2}$.

\begin{lemma}\label{lem:eqdef2}
Definition~\ref{def:XLPP} yields the same complexity class when $\mathcal{M}$ accepts Yes-instances with probability strictly more than $\frac{1}{2}$ and No-instances with at probability at most $\frac{1}{2}$.
\end{lemma}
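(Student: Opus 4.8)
The plan is to mirror the argument already given for Lemma~\ref{lem:eqdef1}, since the only structural difference between the two settings is that $\mathsf{XLPP}$-machines carry a $f(k)\log|x|$ space bound instead of the tail-restricted non-determinism of $\mathsf{W[1]PP}$. Accordingly, I would start by assuming we have a machine $\mathcal{M}$ of the weaker type: it accepts Yes-instances with probability strictly greater than $\frac{1}{2}$ and rejects No-instances with probability at least $\frac{1}{2}$ (equivalently, accepts No-instances with probability at most $\frac{1}{2}$). The goal is to construct an equivalent machine $\mathcal{M}'$ meeting the stronger two-sided bound required in Definition~\ref{def:XLPP}, namely that No-instances are accepted with probability strictly less than $\frac{1}{2}$, while staying within the resource bounds.

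First I would establish a quantitative lower bound on the acceptance gap for Yes-instances. Here lies the one genuine point of care: in the $\mathsf{W[1]PP}$ proof the bound $\frac{1}{2}+|x|^{-f(k)}$ followed from the fact that only $f(k)\log|x|$ non-deterministic steps were taken, so that all acceptance probabilities are multiples of $2^{-f(k)\log|x|} = |x|^{-f(k)}$. In the $\mathsf{XLPP}$ setting the number of random bits a machine may consume is instead bounded by its running time $f(k)|x|^c$, so the natural granularity is $2^{-f(k)|x|^c}$ rather than $|x|^{-f(k)}$. Thus I would argue that a Yes-instance is accepted with probability at least $\frac{1}{2} + 2^{-f(k)|x|^c}$, replacing the polynomial gap by this exponentially small but still strictly positive one.

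Next I would perform the same rejection-shifting construction: $\mathcal{M}'$ simulates $\mathcal{M}$ faithfully, preserves every rejection, and at each point where $\mathcal{M}$ would accept, $\mathcal{M}'$ instead flips the acceptance to a rejection with some small probability $p$ implemented by drawing fresh random bits. Choosing $p = \frac{1}{2}\cdot 2^{-f(k)|x|^c}$ pushes the No-instance acceptance probability strictly below $\frac{1}{2}$ (by subtracting a strictly positive amount) while leaving the Yes-instance probability above $\frac{1}{2}$, since the original gap of $2^{-f(k)|x|^c}$ dominates the at-most-$\tfrac12\cdot 2^{-f(k)|x|^c}$ reduction. The value $p$ is realised by generating roughly $f(k)|x|^c$ additional random bits, which is within the allotted running time.

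The main obstacle is therefore the space bound: generating the probability $p$ requires $\Theta(f(k)|x|^c)$ random coin flips, which naively would seem to demand that many bits of storage and thus violate the $f(k)\log|x|$ space constraint. I would resolve this by observing that the flips can be consumed \emph{online}, one at a time, using only a fixed-width counter of $O(\log(f(k)|x|^c)) = O(\log|x| + \log f(k))$ bits to track how many of the required bits have been seen and whether the termination condition for the coin-flipping subroutine has been met. Since this counter fits within $f(k) + O(\log|x|)$ space and the simulation of $\mathcal{M}$ already respects the space bound by hypothesis, the composite machine $\mathcal{M}'$ stays within $\mathsf{XLPP}$'s resource envelope, completing the equivalence.
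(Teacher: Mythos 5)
Your proposal is correct and follows essentially the same route as the paper's proof, which likewise observes that the only change from Lemma~\ref{lem:eqdef1} is the smaller gap of $2^{-f(k)|x|^c}$ (since non-determinism is now bounded only by the running time) and then takes $f(k)|x|^c+1$ extra non-deterministic steps to flip acceptances into rejections, noting that the space bound easily accommodates the counter needed to track these flips. Your only slip is cosmetic: the counter should be checked against the $\mathsf{XLPP}$ space bound $f(k)\log|x|$ rather than the pl-reduction bound $f(k)+O(\log|x|)$ you cite, but since an $O(\log f(k)+\log|x|)$-bit counter fits either bound the argument is unaffected.
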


\begin{proof}
The argument is essentially the same as for the proof of Lemma~\ref{lem:eqdef1}: though the minimum bound away from $\frac{1}{2}$ is smaller, we can simply take $f(k)|x|^c+1$ non-deterministic steps, as we have more than enough space to keep track of how many we took.   
\end{proof}

As with \textsc{STMMA}, we arrive at a suitable machine acceptance problem by taking the majority version of a problem shown to be complete for $\mathsf{XNLP}$ in \cite{ElberfeldST15}.\\
\\
\textsc{Timed Space-bounded Turing Machine Majority Acceptance (TSTMMA)}\\
\textbf{Input:} A non-deterministic Turing machine $\mathcal{M}$, unary integers $s,t$.\\
\textbf{Parameter:} $s$.\\
\textbf{Question:} Does $\mathcal{M}$ accept with probability strictly greater than $\frac{1}{2}$ within $t$ steps on an initially empty tape using at most $s$ tape cells?

\begin{proposition}\label{prop:TSTMMA}
\textsc{TSTMMA} is $\mathsf{XLPP}$-complete.
\end{proposition}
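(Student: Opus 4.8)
The plan is to establish membership and hardness separately, mirroring the structure of the proof of Proposition~\ref{prop:STMMA} but adapting it to the space-bounded setting of $\mathsf{XNLP}$ and $\mathsf{XLPP}$. For membership, I would observe that \textsc{TSTMMA} can be decided by a probabilistic Turing machine which directly simulates $\mathcal{M}$ on the initially empty tape. Since $\mathcal{M}$ is allowed to use at most $s$ tape cells and run for at most $t$ steps, and both $s$ and $t$ are given in unary, the simulation uses space $O(s \log |\Sigma|)$ to track the tape contents together with $O(\log t)$ space for a step counter and $O(\log s)$ for the head position, where $\Sigma$ is the alphabet of $\mathcal{M}$. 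With $s$ as the parameter, this gives a simulation running in $f(s)|x|^c$ time and $f(s)\log|x|$ space, accepting exactly when $\mathcal{M}$ accepts within the given bounds; the majority condition is inherited directly from $\mathcal{M}$'s own acceptance probability, so the simulating machine accepts with probability strictly greater than $\frac{1}{2}$ precisely on Yes-instances.

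For hardness, I would reduce from an arbitrary $\mathsf{XLPP}$-machine $\mathcal{M}$ witnessing membership of some problem $k$-$D$ in the sense of Definition~\ref{def:XLPP}. Given an input $(x,k)$, such a machine runs in time $f(k)|x|^c$ using space $f(k)\log|x|$, accepting with probability strictly more than $\frac{1}{2}$ exactly on Yes-instances. The reduction would produce an instance $(\mathcal{M}',s,t)$ of \textsc{TSTMMA} in which $\mathcal{M}'$ is a version of $\mathcal{M}$ with $x$ and $k$ hard-coded into its transition function, so that on an initially empty tape it reconstructs the relevant configuration and then simulates $\mathcal{M}$'s computation. The space bound $s$ is set to $f(k)\log|x|$ (written in unary) and the time bound $t$ to $f(k)|x|^c$ (likewise in unary). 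Since the parameter of the resulting instance is $s = f(k)\log|x|$, I must verify that this is bounded by a computable function of $k$ alone for the mapping to be a valid parameterized reduction.

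The main obstacle is precisely this point: the parameter $s$ of \textsc{TSTMMA} depends on $|x|$ through the factor $\log|x|$, whereas a parameterized reduction requires the target parameter to be bounded by $g(k)$ for a computable $g$ independent of the instance size. This is the space-bounded analogue of the \emph{compression} step in the proof of Proposition~\ref{prop:STMMA}, and I expect to resolve it the same way, by compressing the alphabet and transition function of $\mathcal{M}'$. Concretely, I would enlarge the tape alphabet so that each cell of the compressed machine stores a block of $\Theta(\log|x|)$ original cells, reducing the working-space requirement from $f(k)\log|x|$ cells to $g(k)$ cells while correspondingly adjusting the transition function to operate on these blocks. This compression is fixed-parameter tractable to compute and preserves $\mathcal{M}$'s acceptance probability exactly, so the majority condition transfers faithfully and the equivalence between Yes-instances is immediate by construction. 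I would finally confirm that the whole mapping is pl-computable, since the reduction must itself respect the space bounds appropriate to the $\mathsf{XNLP}$/$\mathsf{XLPP}$ setting, which follows from the fact that writing out the unary bounds $s$ and $t$ and the compressed transition table can be done within $f(k) + O(\log|x|)$ space.
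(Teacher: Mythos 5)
Your proposal is correct and follows essentially the same route as the paper: the paper's proof of Proposition~\ref{prop:TSTMMA} simply cites Theorem 4.4 of \cite{ElberfeldST15} for the non-majority problem and observes that the argument transfers, and your direct simulation for membership together with hard-coding the input into the transition function and block-compressing the work tape for hardness (to bring the parameter $s$ down from $f(k)\log|x|$ to $g(k)$) is precisely that argument, written out and adapted to the majority setting, in parallel with the simulation-and-compression structure of Proposition~\ref{prop:STMMA}. The one point to make explicit is that on No-instances your membership simulation may answer correctly with probability exactly $\frac{1}{2}$, which is licensed by Lemma~\ref{lem:eqdef2} rather than by Definition~\ref{def:XLPP} itself.
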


\begin{proof} 
We refer to Theorem 4.4 in \cite{ElberfeldST15} for a proof that the problem \textsc{Timed Space-bounded Turing Machine Acceptance} (there called $p_s$-\textsc{Timed NSC}) is $\mathsf{XNLP}$-complete. One can easily observe that said proof requires no essential adjustments in order to be suitable for our current purposes (cf.~the proof of Proposition~\ref{prop:STMMA}).  
\end{proof}

All that is left at this point is to present the definition of $t$-\textsc{Bayesian Inference} and the proof that it is $\mathsf{XLPP}$-complete. For membership we again require that the probabilities are represented as unary fractions -- we will return to this point in the conclusions.\\
\\
$t$-\textsc{Bayesian Inference}\\
\textbf{Input:} A Bayesian network $\mathcal{B} = (G,\text{Pr})$, two sets of variables $\mathbf{H},\mathbf{E}\subseteq V$ along with joint value assignments $\mathbf{h}\in\Omega(\mathbf{H})$ and $\mathbf{e}\in\Omega(\mathbf{E})$ and a probability threshold $q\in\mathbb{Q}$. We require that all probabilities (including $q$) are represented as a unary fraction, and furthermore that the probabilities of any specific distribution have the same denominator.\\
\textbf{Parameter:} The topological vertex separation number $t(G)$.\\
\textbf{Question:} Does $\text{Pr}(\mathbf{h}\mid\mathbf{e})>q$ hold?

\begin{theorem}\label{thm:tBInf}
$t$-\textsc{Bayesian Inference} is $\mathsf{XLPP}$-complete.
\end{theorem}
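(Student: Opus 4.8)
The plan is to mirror the proof of Theorem~\ref{thm:nBInf}, replacing its tail-restricted non-determinism with the space bound of Definition~\ref{def:XLPP} and exploiting the topological vertex separation number exactly as in Theorem~\ref{thm:tPInf}. For membership I would reuse the forward-sampling argument from the membership part of Theorem~\ref{thm:nBInf} almost verbatim. The one point needing care is that we must now carry out the sampling in $f(t)\log|x|$ space rather than with bounded non-determinism, and this is precisely what the parameter buys us: sampling the variables in a topological order $T$ with $t_T(G)=t$, at no moment do we need to retain more than the $t$ values in $V_T(i)$, since a sampled value is only required until all of its children have been processed. The subsequent acceptance-probability manipulation---the case split on whether evidence is present and on whether $q\geq\frac{1}{2}$---is identical to that of Theorem~\ref{thm:nBInf}; the extra coin flips it needs can be generated in logarithmic space by the same device used for sampling (exploiting the unary representation of $q$), with a logarithmic-size counter administering them, so the whole procedure respects the bounds of Definition~\ref{def:XLPP}.

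For hardness I would give a Cook-style reduction from \textsc{TSTMMA}, which is $\mathsf{XLPP}$-complete by Proposition~\ref{prop:TSTMMA}. Given an instance $(\mathcal{M},s,t)$, I would build essentially the same network as in the hardness proof of Theorem~\ref{thm:nBInf}, now laid out on a grid of $s$ tape cells across $t+1$ time steps: nodes $S_{i,j}$ recording for time $i$ and cell $j$ whether the head is present together with the cell's contents and the locally believed state, child nodes $T_{i,j}$ storing the probabilistically chosen transition, and a final chain of nodes $D_j$ extracting acceptance. The probabilistic transition taken at the head cell encodes $\mathcal{M}$'s non-determinism, so that $\text{Pr}(D_s=\text{True})$ equals the probability that $\mathcal{M}$ accepts within $t$ steps using at most $s$ cells; setting $q=\frac{1}{2}$ then completes the correspondence.

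The hard part will be the two features that set this apart from the $n$-parameterized case. Unlike in Theorem~\ref{thm:nBInf}, the network now has $O(st)$ variables and is therefore \emph{not} bounded by the parameter, so I must (i) exhibit a topological ordering under which every separator set $V_T(i)$ has size $O(s)$, and (ii) verify that the construction is a pl-reduction, i.e.\ computable in space $f(s)+O(\log|x|)$. For (i) I would sweep through the time slices in order, within each slice processing the $S$-nodes and then the $T$-nodes cell by cell; since only the $s$ transition nodes $T_{i,\cdot}$ bridging two consecutive slices, together with at most one partially processed slice, are ever simultaneously live, the separator never exceeds roughly $2s$, just as the ordering in Theorem~\ref{thm:tPInf} confined its live set to $O(k)$ variables. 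For (ii), written out along this ordering each node depends on only $O(s)$ predecessors, so each conditional probability table can be emitted piecewise to the output tape while holding at most $f(s)+O(\log|x|)$ bits of working memory. The genuinely delicate point is confirming that the separator stays $O(s)$ rather than secretly scaling with the polynomially large $t$; this is the crux of the reduction and the step I would scrutinise most carefully.
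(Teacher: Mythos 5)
Your proposal is correct and matches the paper's proof in essentially every respect: membership via forward sampling along a topological ordering witnessing $t_T(G)=t$ (forgetting values once all children are sampled) with the acceptance-probability manipulation carried over from Theorem~\ref{thm:nBInf}, and hardness via the grid-shaped Cook construction from \textsc{TSTMMA} with a row-by-row topological ordering and $q=\frac{1}{2}$. Your separator bound of roughly $2s$ is slightly looser than the paper's $t(G)\leq s+1$, but this is immaterial for a pl-reduction, and your point (ii) about emitting the tables piecewise corresponds to the paper's remark on computing the unary probabilities in logarithmic space via an intermediate binary representation.
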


\begin{proof}
The argument for membership proceeds along the same lines as in the proof of Theorem~\ref{thm:nPInf}, although now we need to manage our space more carefully. While sampling according to a topological ordering $T$ such that $t_T(G) = t$, we check after every new value whether the sample thus far still agrees with $\mathbf{h}$ and $\mathbf{e}$ respectively. This allows us to forget values of variables once their children have been assigned values as well, which ensures that we need at most $t\log|x|$ space at any point during the construction of the sample.\\
\\
To establish $\mathsf{XLPP}$-hardness, we reduce from \textsc{TSTMMA} again using mostly the same construction as in the proof of Theorem~\ref{thm:nPInf}. The only difference (apart from the variables $S_{0,j}$ being initialised as empty) is that we have variables for $i\in[0,t]$ and $j\in [1,s]$, hence there are $2s(t+1)$ variables in total. By choosing the straightforward topological ordering which orders the variables within each row in ascending order before proceeding to the next row, we see that $t(G) \leq s+1$ as required. To see that this is indeed a pl-reduction, note that in particular we can compute the correct probabilities using only logarithmic space by employing a more efficient (binary) representation before writing them out in unary.
\end{proof}

\section{Conclusions}

The $\mathsf{XNLP}$-completeness of $t$-\textsc{Positive Inference} (Theorem~\ref{thm:tPInf}) can be considered the primary result of this paper for a number of reasons. From a negative point of view, the $\mathsf{XLPP}$-completeness of $t$-\textsc{Bayesian Inference} (Theorem~\ref{thm:tBInf}) may be considered somewhat artificial due to the technical requirement that the probabilities are represented in unary, which was necessary to ensure parameterized logspace computability. However, from a positive point of view, the $\mathsf{XNLP}$-completeness of $t$-\textsc{Positive Inference} tells us far more than may be obvious at first glance. Combined with the observation that the topological vertex separation number of a graph is an upper bound on the pathwidth and in particular the treewidth of the moralized graph, and the conjecture from \cite{PilipczukW18} regarding the deterministic space complexity of $\mathsf{XNLP}$-hard problems, this result suggests a superpolynomial lower bound (with respect to any of these graph parameters) on the space required to perform even a limited form of inference. Our hope is that this insight contributes to an increased understanding of the (parameterized) complexity of Bayesian inference, and that it serves as a starting point for further investigations along these lines.

\bibliography{PCRfBI}

\end{document}